\documentclass[submission,copyright,creativecommons]{eptcs}
\providecommand{\event}{LSFA 2026} 

\usepackage{iftex}
\usepackage[mathscr]{eucal} 
\usepackage{amsmath} 
\usepackage{amsthm} 
\usepackage{amssymb} 
\usepackage{float} 
\usepackage{xcolor}
\usepackage{color}

\usepackage[all]{xy}                
\usepackage{subfigure}
\usepackage{tabto}                  
\usepackage{enumitem}               

\ifpdf
  \usepackage{underscore}         
  \usepackage[T1]{fontenc}        
\else
  \usepackage{breakurl}           
\fi

\newtheorem{definition}{Definition}
\newtheorem{example}{Example}
\newtheorem{lemma}{Lemma}
\newtheorem{theorem}{Theorem}
\newtheorem{technique}{Technique}

\newcommand\oa{\overline{\alpha}}
\newcommand\pp{\parallel}
\newcommand\acao{\mathcal{A}ct}

\title{Dynamic Logic with Parallel Operator for Verifying Communication Protocols}
\author{
    Luiz C. F. Fernandez
    \institute{
        PESC/COPPE\\
        Universidade Federal do Rio de Janeiro\\
        Rio de Janeiro/RJ, Brazil
    }
    \email{lcfernandez@cos.ufrj.br}
\and
    Mario R. F. Benevides
    \institute{
        Instituto de Computação\\
        Universidade Federal Fluminense\\
        Niterói/RJ, Brazil}
    \email{mario@ic.uff.br}
}

\def\titlerunning{Dynamic Logic with Parallel Operator for Verifying Communication Protocols}
\def\authorrunning{Luiz C. F. Fernandez \& Mario R. F. Benevides}
\begin{document}
\maketitle

\begin{abstract}
    In this paper, we present a dynamic logic with parallel operators for the formal verification of authenticity and safety properties of cryptographic protocols. The logic incorporates communication actions and is specifically designed to reason about protocol executions in adversarial environments. We extend an existing dynamic logic with parallel operators by introducing concepts derived from the Dolev–Yao intruder model. As the underlying logic is completely axiomatizable, we obtain a complete axiomatization for the extended system. Furthermore, we develop a tableau calculus for the proposed logic and prove its termination, soundness, and completeness.
\end{abstract}

\section{Introduction}

The aim of this work is to bring together the encryption/decryption features of Dolev-Yao systems, the communication action of process algebras and the capacity of reasoning about programs of PDL into one framework. This allows for having a unique framework for reasoning about security, communication, synchronisation using the power of dynamic logic.

    The Dolev-Yao model \cite{DOLEV1983} is a seminal work in the area of formal criptography. Here we are most interested in logical approaches to verify authenticity and secrecy in communication protocols. This model uses a deductive approach to prove that the security of a protocol can be broken by a malicious intruder. Over the years, many works have been trying to combine these concepts with logic \cite{COHEN2007,KRAMER2008,BOUREANU2009,benevidesetal2018dymael}.

    Propositional Dynamic Logic (PDL) \cite{FISCHER1979,hkt} is a well-known multi-modal system, that uses concepts involving properties and dynamic behaviors, which permits us to model and reason about actions in programs. The semantics for PDL is based on the notion of \emph{Labelled Transition Systems} (LTS), very similar to regular \emph{Kripke structures} \cite{KRIPKE1959}.

    With Process Algebras \cite{milner89,FOKKINK2000,BERGSTRA2001} we also are able to work with LTS and specify communication. These models also consider concurrency and interaction
between processes, besides the notion of \emph{bisimulation} for the equivalence of processes.

    There are many different proof calculi, e.g., resolution, natural deduction and tableaux, from different approaches, namely \emph{direct} or \emph{indirect deduction} and \emph{labelled deductive systems}. In the latter, we have \emph{prefixed tableaux} \cite{FITTING1983}, which also has a proof representation similar to Kripke semantics and have been widely implemented for PDL and for modal logics in general \cite{PRATT1978,PRATT1980,DEGIACOMO2000,MASSACCI2000}.

    In the next section we present some basic notions for the background for the rest of the work: the Dolev-Yao model, Propositional Dynamic Logic, Process Calculus and Tableaux Calculus. In section \ref{DDYL} we propose a Dynamic Logic for verifying communication protocols. Finally, we present a tableaux calculus for this logic in section \ref{TCDDYL}, with soundness, completeness and termination proofs, and we conclude with some final remarks in section \ref{conclusion}.

\section{Background}\label{BKG}
    \subsection{Dolev-Yao Model}\label{DYModel}
        Introduced by Dolev and Yao \cite{DOLEV1983} at the time of great discussion about the use of public key encryption in network communication, this work intends to show why a formal model is desirable to deal with security protocols.

        Public key systems are efficient when we have a ``passive'' saboteur (also called eavesdropper, attacker, intruder and so on), who only intercepts the communication and tries to decode the message. But Needham and Schroeder \cite{NEEDHAM1978} already had pointed out that a not well specified protocol permits an ``active'' intruder, one who may fake his identity and manipulate the intercepted message, to succeed.

        \subsubsection{Public key protocols}
            To briefly explain this system \cite{DIFFIE1976,RIVEST1978}, we assume that every user $X$ in the network has an \emph{encryption function} $E_X$, which generates a pair ($X$, $E_X$), which is stored in a secure public directory, and a \emph{decryption function} $D_X$, known only to user $X$. One should notice that the sender's public key is represented, in the message exchange, as a subscript of $E$. The main requirements on the functions above are:

            \begin{itemize}
                \item $D_X (E_X (M)) = M$;

                \item for any user $Y$, knowing $E_X(M)$ and the directory containing all the public pairs does not reveal anything about $M$.
            \end{itemize}

            So, other users can communicate with $X$ by sending an encrypted message $E_X(M)$ and $X$ can decrypt it using $D_X(E_X(M)) = M$, but only $X$ gets $M$, even if $E_X(M)$ is accessible to everyone.

            A message transmitted between two users is denoted by: the sender's name, the text (encrypted) and the receiver's name. One of the basic assumptions in the perfect public key system is that the functions are unbreakable.

            To illustrate intruder's possible behaviours, let's consider the following example.

            \begin{example}\label{ex-dy-mitm}
                In this example, also called \emph{Man-in-the Middle (MITM) attack}, the plaintext is encoded with an encryption function, where the receiver always replies using the sender's public key. Suppose user $A$ wants to send a plaintext $M$ to user $B$:

                {\small

                \begin{enumerate}
                    \item[a)] $A$ tries to send a message $(A, E_B (M), B)$ to $B$ \emph{[Figure \ref{fig-dy-mitm-step-1}]};

                    \item[b)] Intruder Z intercepts the above message and sends message $(Z, E_B (M), B)$ to $B$ \emph{[Figure \ref{fig-dy-mitm-step-2}]};

                    \item[c)] $B$ sends message $(B, E_Z (M), Z)$ to $Z$ \emph{[Figure \ref{fig-dy-mitm-step-3}]};

                    \item[d)] $Z$ decodes $E_Z (M)$ and obtains $M$.
                \end{enumerate}

                }
            \end{example}

            \vspace{-20pt}

            \begin{figure}[H]
                \centering

                \subfigure[Message from $A$ to $B$]{
                    $\xymatrix{A \ar[r] & (A, E_B (M), B) \ar[r] & B}$
                    \label{fig-dy-mitm-step-1}
                }
                \\
                \subfigure[Interception and message from $Z$ to $B$]{
                    $\xymatrix{A \ar[rr] \ar[ddrr] && \mid \ar[rr] && B\\
                    {}_{(A, E_B (M), B)} &&&& {}_{(Z, E_B (M), B)}\\
                    && Z \ar[uurr] &&}$
                    \label{fig-dy-mitm-step-2}
                }
                \quad
                \subfigure[Message from $B$ to $Z$]{
                    $\xymatrix{
                    && B \ar[ddll]^{(B, E_Z (M), Z)}\\
                    &&\\ Z &&
                    }$
                    \label{fig-dy-mitm-step-3}
                }

                \caption{Illustration of Example \ref{ex-dy-mitm}}
            \end{figure}

            \vspace{-40pt}

        \subsubsection{Rules}
            The rules presented below are not formulated in the original paper \cite{DOLEV1983}, but we can easily obtain them from the theory presented there and put them in a new notation. They permit the intruder to make deductions from the intercepted and sent messages.

            Here, we are assuming an enumerable set $\mathscr{K} =\{ k_1, \dots \}$ of keys, a set $T$ containing all the information (messages, keys, etc) that the intruder has and an encryption function $\{\_\}_k$, which encrypts the message $M$ under the key $k$, $\{M\}_k$.

            The \emph{entailment} relation $T \vdash M$ has the intuitive meaning that $M$ can be computed from $T$. This relation is defined inductively, in a natural deduction-like system. Some works have defined a similar notation \cite{ABADI2002,AYALA-RINCON2013}:

            \

            {\small

            \noindent \centerline{\emph{Reflexivity} $\dfrac{M \in T}{T \vdash M}$ \qquad \emph{Encryption} $\dfrac{{T \vdash M} \quad {T \vdash k}}{
            {T \vdash \{M\}_k}}$ \qquad \emph{Decryption} $\dfrac{{T \vdash \{M\}_k} \quad {T \vdash k} } {{T \vdash M}}$}

            \

            \

            \noindent \centerline{\emph{Pair-Composition} $\dfrac{{T \vdash M} \quad {T \vdash N}}{
            {T \vdash (M,N)}}$ \qquad \emph{Pair-Decomposition} $\dfrac{{T \vdash (M,N)}}{{T \vdash M}} \qquad \dfrac{{T \vdash (M,N)}}{{T \vdash N}}$}

            }

    \subsection{Propositional Dynamic Logic}\label{PDL}
        Propositional Dynamic Logic (PDL) was conceived to reason about programs \cite{hkt,goldblatt92,pmy}. Its most common operators are: non-deterministic choice ($\cup$), sequential composition ($;$), iteration ($*$) and test ($?$), this logic is called PDL for regular programs. Its semantics is given by Labelled Transition Systems (LTS), where the transition relation of the LTS is  a binary relation $R_{\pi}$, for each program $\pi$. The sequential composition, non-deterministic choice and iteration operators are defined as the composition, union and transitive reflexive closure of relations respectively: $$R_{\pi_1 ; \pi_2} = R_{\pi_1} \circ R_{\pi_2} \quad R_{\pi_1 \cup \pi_2} = R_{\pi_1} \cup~ R_{\pi_2} \quad R_{\pi^*} = (R_{\pi})^*$$

        In this section, we present the syntax and semantics of PDL.

        \begin{definition}\label{def-langpdl}
            The PDL alphabet consists of a set $\Phi$ of countably many propositional symbols, the propositional constant $\top$ (true), a set ${\cal P}$ of countably many basic programs (or action names), the Boolean connectives $\neg$ and $\land$, the program constructors $;$ (sequential composition), $\cup$ (non-deterministic choice), $?$ (test) and $\phantom{}^*$ (iteration) and a modality $\langle \pi \rangle$ for every program $\pi$. The formulas are defined by the following BNF:

            $$\varphi ::= p \mid \top \mid \neg \varphi \mid (\varphi_1 \wedge \varphi_2) \mid \langle \pi \rangle \varphi , \text{ with}$$
            $$\pi ::= \alpha \mid (\pi_1;\pi_2) \mid (\pi_1 \cup \pi_2) \mid  \varphi? \mid \pi^* ,$$
            \noindent where $p$ ranges over $\Phi$ and $\alpha$ ranges over ${\cal P}$.
        \end{definition}

        In all the logics that appear in this paper, we use the standard abbreviations: $\bot := \neg \top$, $\varphi \lor \phi := \neg ( \neg \varphi \land \neg \phi)$, $\varphi \rightarrow \phi := \neg( \varphi \land \neg \phi)$ and $[\pi] \varphi := \neg \langle \pi \rangle \neg \varphi$.

        \begin{definition}\label{framepdl}
            A \emph{model} for PDL is a tuple $\mathcal{M}= (W, {\bf R}, {\bf V})$ where:

            \begin{itemize}
                \item $W$ is a non-empty set of states;
                \item ${\bf R} = \{R_{\alpha} \mid \alpha \in {\cal N} \}$, $R_\alpha$ are binary relations over $W$, for each basic program $\alpha \in {\cal P}$;
                \item We can  define a binary relation $R_{\pi}$ by mutual induction (Definition \ref{def-satpdl}), for each non-basic program $\pi$, as follows:

                \NumTabs{9}
                \begin{itemize}
                    \item $R_{\pi_1 ; \pi_2}$ \tab $ = R_{\pi_1} \circ R_{\pi_2}$,
                    \item $R_{\pi_1 \cup \pi_2}$ \tab $ = R_{\pi_1} \cup R_{\pi_2}$,
                    \item $R_{\varphi?}$ \tab $ = \{(w,w) \mid {\cal M},w \Vdash \varphi \}$,
                    \item $R_{\pi^*}$ \tab $ = R_{\pi}^*$, where $R_{\pi}^*$ is the reflexive transitive closure of $R_{\pi}$.
                \end{itemize}

                \item ${\bf V}$ is a valuation function ${\bf V} : \Phi \to 2^W$.
            \end{itemize}

            We say that ${\cal F}= (W, {\bf R})$ is a PDL frame.
        \end{definition}

        \begin{definition}\label{def-satpdl}
            Let $\mathcal{M}= ({\cal F}, {\bf V})$ be a model. The notion of \emph{satisfaction} of a formula $\varphi$ in a model $\mathcal{M}$ at a state $w$, notation $\mathcal{M},w \Vdash \varphi$, can be  defined by a mutual induction (Definition \ref{framepdl}) as follows:

            \begin{itemize}
                \item $\mathcal{M},w \Vdash p$ iff $w \in {\bf V}(p)$;
                \item $\mathcal{M},w \Vdash \top$ always;
                \item $\mathcal{M},w \Vdash \neg \varphi$ iff $\mathcal{M},w \not\Vdash \varphi$;
                \item $\mathcal{M},w \Vdash \varphi_{1} \wedge \varphi_{2}$ iff $\mathcal{M},w \Vdash \varphi_{1}$ and $\mathcal{M},w \Vdash \varphi_{2}$;
                \item $\mathcal{M},w \Vdash \langle \pi \rangle \varphi$ iff there is $w' \in W$ such that $w R_{\pi} w'$ and $\mathcal{M},w' \Vdash \varphi$.
            \end{itemize}
        \end{definition}

    \subsection{Process Calculus}
        In this section, we propose a small process (program) calculus for a subset of PDL programs extended with a parallel composition operator.

        We also do not make any distinction between programs and processes. In this work,  processes and programs are used interchangeably.

        Let ${\cal N} = \{x,y,z,\ldots\} $  be a set of names or input actions, denoted by $\alpha, \beta, ...$. We have the special action $END$, which corresponds to the program that is incapable of performing any running action, but it is capable of successfully finishing.

        The set of output actions called co-names is $ \overline{\cal N} =\{ \overline{x}, \overline{y},...\}$ such that $x \in \cal{N}$ iff $\overline{x} \in  \overline{\cal N} $. There is a special action called, silent action $\tau(.)$,  denoting internal communication action. The set of all possible actions is defined as $\acao = {\cal N} \cup \overline{\cal N} \cup \{ \tau \} \cup \{ END \}$. The language is defined as follows:
        \smallskip

        $\pi ::= \alpha.\pi \mid END \mid (\pi_1;\pi_2) \mid (\pi_1 + \pi_2)  \mid (\pi_1 \pp \pi_2),~ where~p \in \Phi ~ and ~\alpha \in \acao$.
        \smallskip

        The semantics of our process calculus is given by the transition rules, labelled by programs, presented in the Table \ref{tab:semccs-par}:


        \begin{table}
            \centering\Large
            \begin{tabular}{|c|c|c|c|}
                \hline
                $\frac{}{\alpha.\pi \stackrel{\alpha}{\rightarrow} \pi}$ &
                $\frac{\pi \stackrel{\alpha}{\rightarrow} \pi'} {\pi ; \gamma \stackrel{\alpha}{\rightarrow} \pi' ; \gamma}$ &
                $\frac{}{END \stackrel{END}{\rightarrow} \surd}$ &
                \\
                \hline
                $\frac{\pi \stackrel{\alpha}{\rightarrow} \pi'} {\pi + \gamma \stackrel{\alpha}{\rightarrow} \pi'}$ &
                $\frac{\gamma \stackrel{\beta}{\rightarrow} \gamma'} {\pi + \gamma \stackrel{\beta}{\rightarrow} \gamma'}$ &
                $\frac{\pi \stackrel{END}{\rightarrow} \surd} {\pi + \gamma \stackrel{END}{\rightarrow} \surd}$ &
                $\frac{\gamma \stackrel{END}{\rightarrow} \surd} {\pi + \gamma \stackrel{END}{\rightarrow} \surd}$
                \\
                \hline
                $\frac{\pi \stackrel{\alpha}{\rightarrow} \pi'}{\pi \pp \gamma \stackrel{\alpha}{\rightarrow} \pi' \pp \gamma}$ &
                $\frac{\gamma \stackrel{\beta}{\rightarrow} \gamma'}{\pi \pp \gamma \stackrel{\beta}{\rightarrow} \pi \pp \gamma'}$ &
                $\frac{\pi \stackrel{\alpha}{\rightarrow} \pi', \gamma \stackrel{\overline{\alpha}}{\rightarrow} \gamma'}{\pi \pp \gamma \stackrel{\tau}{\rightarrow} \pi' \pp \gamma'}$ &
                $\frac{\pi \stackrel{END}{\rightarrow} \surd, \gamma \stackrel{END}{\rightarrow} \surd}{\pi;\gamma \stackrel{END}{\rightarrow} \surd}$
                \\
                \hline
                $\frac{\pi \stackrel{END}{\rightarrow} \surd, \gamma \stackrel{\alpha}{\rightarrow} \gamma'} {\pi ; \gamma \stackrel{\alpha}{\rightarrow} \gamma'}$ &
                &
                &
                $\frac{\pi \stackrel{END}{\rightarrow} \surd, \gamma \stackrel{END}{\rightarrow} \surd}{\pi \pp \gamma \stackrel{END}{\rightarrow} \surd}$
                \\
                \hline
            \end{tabular}
            \caption{Transition Relation}
            \label{tab:semccs-par}
        \end{table}


        The \emph{sequential composition} operator $;$ denotes that the process will first execute $\pi_1$ and then behave as $\pi_2$. The \emph{summation} (or \emph{non-deterministic choice}) operator $+$ denotes that the process will make a non-deterministic choice to behave as either $\pi_1$ or $\pi_2$. The \emph{parallel composition} operator $\pp$ denotes that the processes $\eta_1, \dots, \eta_n$, performed by agents $1,\dots,n$ respectively, may proceed independently or may communicate through a common channel.

The symbol  $\surd$ is used to express {\it successful termination}.      $\pi \stackrel{\alpha}{\rightarrow} \pi'$ express that the process $\pi$ can perform the action $\alpha$ and after that behave as $\pi'$. $\pi \stackrel{END}{\rightarrow} \surd$ express that the process $\pi$ successfully finishes after performing the action $END$. A process  finishes when there is no possible action left for it to perform.  When a process finishes inside a parallel composition, sequential composition or non-deterministic choice one writes $\pi$ instead of $\pi\pp \surd$, $\pi ; \surd$ and $\pi + \surd$. One uses $\surd$ instead of  $\surd \pp \surd$.

        \subsubsection{Bisimulation}
            The concept of bisimulation is a key notion in any process algebra. It is an equivalence relation between processes which have mutually similar behaviour. The intuition is that two bisimilar processes cannot be distinguished by an external observer. The use of the notion of bisimulation allows one to transform any process to an equivalent one that is a summation of all their possible actions. That is what the Expansion Law (Theorem \ref{teo:EL}) states.


            \begin{definition}[\cite{milner89}](Bisimulation)\label{bisimula}

                \begin{itemize}
                    \item Let $\Pi$ be the set of all programs. A set $Z \subseteq \Pi \times \Pi$ is a \emph{strong bisimulation} if $(\pi_1,\pi_2) \in Z$ implies the following for all $\alpha \in \acao$:

                    \begin{itemize}
                        \item If $\pi_1 \stackrel{\alpha}{\rightarrow} \pi_1'$, then there is $\pi_2' \in \Pi$ such that $\pi_2 \stackrel{\alpha}{\rightarrow} \pi_2'$ and $(\pi_1',\pi_2') \in Z$;
                        \item If $\pi_2 \stackrel{\alpha}{\rightarrow} \pi_2'$, then there is $\pi_1' \in \Pi$ such that $\pi_1 \stackrel{\alpha}{\rightarrow} \pi_1'$ and $(\pi_1',\pi_2') \in Z$;
                        \item $\pi_1 \stackrel{\alpha}{\rightarrow} \surd$ if and only if $\pi_2 \stackrel{\alpha}{\rightarrow} \surd$.
                    \end{itemize}

                    \item Two process $\pi$ and $\pi'$ are \emph{strongly bisimilar} (or simply \emph{bisimilar}), denoted by $\pi \simeq \pi'$, if there is a strong bisimulation $Z$ such that $(\pi,\pi') \in Z$.
                \end{itemize}
            \end{definition}

            We introduce the Expansion Law, which is very important in the definition of the semantics of our logic and its axiomatization. We present a particular case of the Expansion Law, which is suited to our needs. We only use it for the case of parallel composition operator. The most general case of the Expansion Law is presented in \cite{milner89}.

            \begin{theorem}[Expansion Law (EL)]\label{teo:EL}
                Let $\pi = \pi_1 \pp \pi_2$. Then
                    $$
                    \pi \simeq \sum_{\pi_1 \stackrel{\alpha}{\rightarrow} \pi_1'} \alpha.(\pi_1' \pp \pi_2) + \sum_{\pi_2 \stackrel{\beta}{\rightarrow} \pi_2'} \beta.(\pi_1 \pp \pi_2') + \sum_{R \in A_{\tau}} \tau . R,
                    $$
                where $A_{\tau} = \{ (\pi_1' \pp \pi_2') : \pi_1 \stackrel{\alpha}{\rightarrow} \pi_1' \mbox{ and } \pi_2
                \stackrel{\overline{\alpha}}{\rightarrow} \pi_2', \mbox{ for some } \alpha \in {\cal N} \} \cup \{ (\pi_1' \pp \pi_2') : \pi_1
                \stackrel{\overline{\alpha}}{\rightarrow} \pi_1' \text{  and } \pi_2 \stackrel{\alpha}{\rightarrow} \pi_2', \text{ for some } \alpha \in {\cal N} \}$.
                If $A_{\tau} = \emptyset$, then $\sum_{R \in A_{\tau}} \tau . R $ must be removed from the equation.
                We denote the right side of this bisimilarity by $Exp(\pi)$.
            \end{theorem}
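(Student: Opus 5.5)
The plan is the standard one from \cite{milner89}. I would exhibit an explicit relation and check that it is a strong bisimulation in the sense of Definition \ref{bisimula}. Let
$$Z = \{(\pi_1 \pp \pi_2,\; Exp(\pi_1 \pp \pi_2))\} \cup \{(\rho,\rho) \mid \rho \in \Pi\}.$$
The identity part is trivially closed under the transfer conditions. So the work is to check the three clauses of Definition \ref{bisimula} for the single pair $(\pi_1 \pp \pi_2, Exp(\pi_1\pp\pi_2))$. The derivatives reached on both sides will be syntactically identical processes, so the resulting pairs land in the identity part of $Z$.

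First I will record how a sum of prefixes behaves under Table \ref{tab:semccs-par}. By the axiom $\alpha.\pi \stackrel{\alpha}{\rightarrow} \pi$ and the two summation rules, induction on the number of summands shows that $\sum_i \alpha_i.\rho_i \stackrel{\gamma}{\rightarrow} \rho$ holds iff $\gamma = \alpha_i$ and $\rho = \rho_i$ for some $i$. This describes exactly the transitions of $Exp(\pi)$.

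Second, I will classify the transitions of $\pi_1 \pp \pi_2$ by the last rule used in their derivation. Exactly three rules can produce a move to a process (not $\surd$):
\begin{itemize}
\item the left interleaving rule, giving $\pi_1 \pp \pi_2 \stackrel{\alpha}{\rightarrow} \pi_1' \pp \pi_2$ with $\pi_1\stackrel{\alpha}{\rightarrow}\pi_1'$;
\item the right interleaving rule, giving $\pi_1 \pp \pi_2 \stackrel{\beta}{\rightarrow} \pi_1 \pp \pi_2'$ with $\pi_2\stackrel{\beta}{\rightarrow}\pi_2'$;
\item the communication rule, giving $\pi_1 \pp \pi_2 \stackrel{\tau}{\rightarrow} \pi_1'\pp\pi_2'$ with complementary actions.
\end{itemize}
These three cases are precisely the three groups of summands in $Exp(\pi)$, the last group being $A_\tau$. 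I need both orientations $\alpha/\overline{\alpha}$ in $A_\tau$. This requires reading the communication rule symmetrically, i.e. with $\overline{\overline{\alpha}}=\alpha$, which I will state explicitly. Given this matching, each transition on the left is answered by the same label and the same target on the right, and conversely. The convention that $\pi \pp \surd$ is written $\pi$ only identifies targets syntactically on both sides, so it does no harm.

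The main obstacle is the third clause, termination via $END \to \surd$. The parallel composition can do $\pi_1\pp\pi_2 \stackrel{END}{\rightarrow}\surd$ by the last rule of the table when both components terminate. A sum of prefixes $\alpha.(\cdot)$ never has a transition to $\surd$. Moreover, summands of the form $END.(\ldots)$ arising from $\pi_i \stackrel{END}{\rightarrow}$ would not be matched. My first attempt is to read the sums in $Exp(\pi)$ as ranging only over transitions to processes. I would then add an explicit summand $END$ exactly when $\pi_1 \stackrel{END}{\rightarrow}\surd$ and $\pi_2\stackrel{END}{\rightarrow}\surd$. Since $END \stackrel{END}{\rightarrow}\surd$ and the summation rules propagate this, this summand restores the third clause, and conversely the summation rules give no spurious $\surd$-transitions. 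I will flag that the statement implicitly assumes this reading. With it, $Z$ is a strong bisimulation, and hence $\pi \simeq Exp(\pi)$.
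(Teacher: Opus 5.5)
The paper gives no proof of its own; it imports the law from \cite{milner89}. Your argument is the standard proof from that source: take the relation $\{(\pi_1\pp\pi_2, Exp(\pi_1\pp\pi_2))\}$ together with the identity, and do a case analysis on the last rule of Table~\ref{tab:semccs-par}. Your two caveats are genuine defects of the statement as printed, and the paper never addresses them.
\begin{itemize}
\item Without the symmetric reading of the communication rule, the second half of $A_\tau$ contributes $\tau$-summands to $Exp(\pi)$ that $\pi$ cannot match.
\item For $\pi_1=\pi_2=END$, the move $END\pp END\stackrel{END}{\rightarrow}\surd$ cannot be answered by any sum of prefixes. So the third clause of Definition~\ref{bisimula} fails, and the law does not hold without your extra $END$ summand.
\end{itemize}
You might also note that the calculus has no recursion, so every process has finitely many transitions and $Exp(\pi)$ is a finite, well-formed sum.

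The one thing to tighten is that your repair silently fixes how the interleaving rules treat $\surd$. You claim that summands $END.(\ldots)$ arising from $\pi_i\stackrel{END}{\rightarrow}\surd$ are unmatched. That holds only if the interleaving rules fire exclusively on transitions to processes.

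The paper points to the other reading, in which $\pi_1\stackrel{END}{\rightarrow}\surd$ yields $\pi_1\pp\pi_2\stackrel{END}{\rightarrow}\pi_2$. Two things suggest this:
\begin{itemize}
\item its convention of writing $\pi$ for $\pi\pp\surd$, which is only needed if such targets actually arise;
\item its own example, where R$_{\text{Par}}$ splits $\langle END\pp END\rangle\neg M$ into two branches.
\end{itemize}
Under that reading, restricting the sums to process targets deletes the summand $END.\pi_2$ that matches this move, so the first transfer clause fails. The right repair there is to keep every summand (using the convention) and only add the $END$ summand for joint termination. This is also why your remark that the convention ``does no harm'' sits uneasily with your repair.

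State your reading of the interleaving rules explicitly and the proof is complete. The separate $\surd$-rules for $+$ and $;$ in the table, which would be redundant or conflicting under the other reading, are a reasonable argument for yours.
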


            The Expansion Law is a very useful property in process algebras. Its intuition is that processes can be rewritten as a summation of all their possible actions. Suppose we have processes $A \stackrel{def}{=} x.A'$ and $B \stackrel{def}{=} \overline{x}.B'$, then the process $(A \pp B)$ is equivalent, using the Expansion Law, to: $$(A \pp B) \simeq x.(A' \pp B) + \overline{x}.(A \pp B') + \tau.(A' \pp B').$$

    \subsection{Tableaux Calculus}\label{Tableaux}
        In this section, we present the tableaux method. This proof procedure already have been provided for PDL \cite{PRATT1978,PRATT1980}. The following definitions and rules are based on proposals for some extensions and for modal logics \cite{DEGIACOMO2000,MASSACCI2000}.

        \begin{definition}
            \emph{Prefixed formulas} are pairs $\langle \sigma : \varphi \rangle$ where $\varphi$ is a formula and $\sigma$ is defined as $\sigma ::= 1 \mid \sigma.a.n$, i.e., an alternating sequence of integers and atomic programs, starting from the initial state 1 and reaching the state where $\varphi$ holds. We say that $\sigma$ is a \emph{prefix}.
        \end{definition}

        \begin{definition}[\cite{FITTING1983,GORE1999,MASSACCI1994}]
            A \emph{tableau} $\mathscr{T}$ is a rooted tree where nodes are labelled with prefixed formulas, a \emph{branch} $\theta$ is a path from the root to a leaf (intuitively, this is a model for the initial formula) and a \emph{segment} ${\cal S}$ is a path from the root to a node of the tree.
        \end{definition}

        \begin{definition}
            A prefix is \emph{present} in a segment if there is a prefixed formula with that prefix already in the segment, and it is \emph{new} otherwise.
            \vspace{-10pt}
        \end{definition}

        \subsubsection{Rules}
            Here we present the classical and PDL tableau rules, including for an atomic program $A$:

            \

            {\small

            \centerline{
                R$_\land$ $\dfrac{\sigma: \varphi \land \psi}{\genfrac{}{}{0pt}{}{}{\genfrac{}{}{0pt}{0}{\sigma: \varphi}{\sigma: \psi}}}$
                \qquad
                R$_\land^\neg$ $\dfrac{\sigma: \neg (\varphi \land \psi)}{\sigma: \neg \varphi \quad \sigma: \neg \psi}$
                \qquad
                R$_\text{Dneg}$ $\dfrac{\sigma: \neg \neg \varphi}{\sigma: \varphi}$
                \qquad
                R$_\text{Seq}$ $\dfrac{\sigma: \langle \pi_1 ; \pi_2 \rangle \varphi}{\sigma: \langle \pi_1 \rangle \langle \pi_2 \rangle \varphi}$

            }

            \

            \

            \centerline{
                R$_\text{Seq}^\neg$ $\dfrac{\sigma: \neg \langle \pi_1 ; \pi_2 \rangle \varphi}{\sigma: \neg \langle \pi_1 \rangle \langle \pi_2 \rangle \varphi}$
                \qquad
                R$_\text{Test}$ $\dfrac{\sigma: \langle \varphi ? \rangle \psi}{\genfrac{}{}{0pt}{}{}{\genfrac{}{}{0pt}{0}{\sigma: \varphi}{\sigma: \psi}}}$
                \qquad
                R$_\text{Test}^\neg$ $\dfrac{\sigma: \neg \langle \varphi ? \rangle \psi}{\sigma: \neg \varphi \quad \sigma: \neg \psi}$
                \qquad
            }

            \

            \

            \centerline{
                R$_\text{Choice}$ $\dfrac{\sigma: \langle \pi_1 + \pi_2 \rangle \varphi}{\sigma: \langle \pi_1 \rangle \varphi \quad \sigma: \langle \pi_2 \rangle \varphi}$
                \qquad
                R$_\text{Choice}^\neg$ $\dfrac{\sigma: \neg \langle \pi_1 + \pi_2 \rangle \varphi}{\genfrac{}{}{0pt}{}{}{\genfrac{}{}{0pt}{0}{\sigma: \neg \langle \pi_1 \rangle \varphi}{\sigma: \neg \langle \pi_2 \rangle \varphi}}}$
            }

            \

            \

            \centerline{
                R$_{\langle A \rangle}$ $\dfrac{\sigma: \ \langle A \rangle \varphi}{\sigma.A.n: \ \varphi}$, with $\sigma.A.n$ new in the branch
            }

            \

            \

            \centerline{
                R$_{\langle A \rangle}^\neg$ $\dfrac{\sigma: \ \neg \langle A \rangle \varphi}{\sigma.A.n: \ \neg \varphi}$, with $\sigma.A.n$ already present in the branch
            }

            }

            \

            We omit the rules for $\phantom{}^*$-iteration operator since our proposal does not include such operator. If a segment terminates into a branching due to R$_\wedge^\neg$, R$_\text{Test}^\neg$ or R$_\text{Choice}$, we denote the left-hand extension of ${\cal S}$ with ${\cal S}_l$ and the right-hand extension with ${\cal S}_r$.

            Given this set of rules, for ${\cal S}$ is a segment (possibly a branch) of a tableau, the notation ${\cal S}/ \sigma$ stands for the set of prefixed formulas in ${\cal S}$ labelled with the prefix $\sigma$: ${\cal S}/ \sigma = \{\varphi ~|~ \langle \sigma : \varphi \rangle \in {\cal S}\}$.

            \begin{definition}\label{def-tableaux-branch-reduced}
                A prefix $\sigma$ is \emph{reduced} in ${\cal S}$ if $\langle A \rangle$-rules are the only rules not yet applied to formulas of ${\cal S}/ \sigma$ and it is \emph{fully reduced} if all rules have been applied.
            \end{definition}

            \begin{definition}
                A prefix $\sigma$ in the segment ${\cal S}$ is a copy of a prefix $\sigma_0$ in ${\cal S}_0$ if $\cal S$ / $\sigma = $ ${\cal S}_0$ / $\sigma_0$ and both have the same form $\sigma'.A.n$ and $\sigma_0'.A.m$ for the same atomic program $A$.
            \end{definition}

             The pair $\langle \sigma_0, {\cal S}_0 \rangle$ is shorter than $\langle \sigma, \theta \rangle$ if $\sigma_0$ is a proper initial subsequence of $\sigma$ and ${\cal S}_0$ is an initial subsegment of $\theta$ or $\sigma = \sigma_0$ and ${\cal S}_0$ is a proper initial subsegment of $\theta$.
             
            \begin{definition}
                A branch $\theta$ is $\pi$-\emph{completed} if:
                \begin{enumerate}
                    \item all prefixes are reduced;
                    \item for every $\sigma$ which is not fully reduced there is a pair $\langle \sigma_0, {\cal S}_0 \rangle$ shorter than $\langle \sigma, \theta \rangle$ such that $\sigma_0$ is fully reduced in the segment ${\cal S}_0$ and $\sigma$ is a copy of $\sigma_0$ in $\theta$.
                \end{enumerate}
            \end{definition}

            \begin{definition}
                A branch $\theta$ is \emph{contradictory} iff, considering some $\sigma$ and some $P$, it contains both $\sigma : P$ and $\sigma : \neg P$.
            \end{definition}

            \begin{definition}
                A tableau is \emph{closed} if all branches are contradictory and it is \emph{open} if at least one branch is open ($\pi$-completed and non-contradictory).
            \end{definition}

            \begin{definition}
                A \emph{tableau validity proof} for the formula $\varphi$, given other formulas as premises, is the closed tableau starting with these premises and $\neg \varphi$.
            \end{definition}

\section{Dynamic Dolev-Yao Logic}\label{DDYL}
    In this section we present the language, semantics and axiomatization of our Dynamic Dolev-Yao Logic (DDYL). We refrain from using the iteration operator, as its interaction with the composition operator leads to a substantial increase in computational complexity (see \cite{tcs/Benevides17}).

    \subsection{Language}\label{langddyl}
        In the language of DDYL, formulas are built from expressions and not only from propositional symbols. Intuitively, an expression is any piece of information that can be encrypted, decrypted or concatenated in order to be communicated.

        \begin{definition}
            The language of DDYL consists of an enumerable set $\Phi$ of propositional symbols, a finite set $\mathscr{A}$ of agents, a finite set of keys $\mathscr{K} = \{k_a, \bar{k_a}, \cdots \}$, two for each agent, one public and one private key for each agent $a$, the Boolean connectives $\neg$ and $\land$ and a modality $\langle \pi \rangle$, for each protocol $\pi$. The expressions and formulas are defined by the following BNF:

            $$E ::=  p \mid k \mid (E_1, E_2) \mid \{E\}_k,$$

            where $k \in \mathscr{K}$ and $p \in \Phi$.

            $$\varphi ::= e \mid \top \mid \neg \varphi \mid \varphi_1 \wedge \varphi_2 \mid \langle \pi \rangle \varphi,$$

            where $e \in E$ and $\pi$ is a protocol defined as follows:

            $$\pi : : = \alpha(m) \mid \pi_1 + \pi_2 \mid \pi_1 ; \pi_2 \mid \varphi? \mid \pi_1 \pp \pi_2,$$

            where $\alpha \in \acao$ is a communication action/port.

            A message $m$ is any expression  $e$, where $e \in E$.

            We have \textbf{input actions/ports} and \textbf{output actions/ports}. We use the convention that $\alpha(m)$ is an input action receiving message $m$ and $\oa(m)$ is its correspondent output action sending message $m$. Their intuitive meaning is:

            \begin{itemize}
                \item $\oa(m)$ - ``a message $m$ is sent on communication port $\alpha$'';

                \item $\alpha(m)$ - ``a message $m$ is received on communication port $\alpha$''.
            \end{itemize}

            We also have the set ${\cal A} = \{\tau (m_1), \tau (m_2) \cdots \}$ of joint communication actions, one for each pair $(\alpha(m), \oa(m))$.

        \end{definition}

    \subsection{Semantics}\label{semanticsddyl}
        This section presents the notions of models and satisfaction.

        \begin{definition}\label{frameddyl}
            A \emph{model} for DDYL is the tuple $\mathcal{M}= (W, {\bf R}, {\bf V})$ where:

            \begin{itemize}
                \item $W$ is a non-empty set of states;
                \item ${\bf R} = \{R_{\alpha} \mid \alpha \in \acao \}$, $R_a$ are binary relations over $W$, for each basic program $\alpha \in \acao$. For the actions $END$ and $\tau$ the following conditions must be satisfied:
                \begin{enumerate}
                    \item[i.] $R_{END} = \{(w,w) \mid w \in W\}$;
                    \item[ii.] $R_{\tau}$ is serial, for all $w \in W$, there exists $w'$, such that $(w,w') \in R_{\tau}$;
                    \item[iii.] If $(w,w') \in R_{\tau(m)}$ then $\mathcal{M}, w' \Vdash m$.
                \end{enumerate}
                \item We can  define a binary relation $R_{\pi}$ by mutual induction (Definition \ref{sat-ddyl}), for each non-basic program $\pi$, as follows:

                \NumTabs{18}
                \begin{itemize}
                    \item $R_{\pi_1 ; \pi_2}$ \tab $ = R_{\pi_1} \circ R_{\pi_2}$,
                    \item $R_{\pi_1 + \pi_2}$ \tab $ = R_{\pi_1} \cup R_{\pi_2}$,
                    \item $R_{\pi_1 \pp \pi_2}$ \tab $ =  R_{ \alpha_1. \gamma_1} \cup ... \cup R_{\alpha_n.\gamma_n}$, where $Exp(\pi_1 \pp \pi_2) = \alpha_1. \gamma_1 + ... + \alpha_n.\gamma_n$ and $\alpha_i \in \acao$, for $1 \leq i \leq n$. \footnote{$Exp(\pi_1 \pp \pi_2)$ is as defined in Theorem \ref{teo:EL}}
                \end{itemize}

                \item ${\bf V}$ is a valuation function $V : E \to 2^W$ satisfying the following conditions for all $m \in E$, $k_a, \bar{k_a} \in \mathscr{K}$ and all agents $a \in \mathscr{A}$:

                \begin{enumerate}
                    \item\label{condition:ddylModelEncription} $V(m) \cap V(k_a) \subseteq V(\{m\}_{k_a})$
                    \item\label{condition:ddylModelDecryption} $V(\{m\}_{k_a}) \cap V(\bar{k_a}) \subseteq V(m)$
                    \item\label{condition:ddylModelPair} $V(m) \cap V(n) = V((m,n))$
                \end{enumerate}
            \end{itemize}
            We say that ${\cal F}= (W, {\bf R})$ is a DDYL frame.
        \end{definition}

        \begin{definition}\label{sat-ddyl}
            Let $\mathcal{M}= ({\cal F}, {\bf V})$ be a model. The notion of \emph{satisfaction} of a formula $\varphi$ in a model $\mathcal{M}$ at a state $w$, notation $\mathcal{M},w \Vdash \varphi$, can be defined by a mutual induction (Definition \ref{frameddyl}) as follows:

            \begin{itemize}
                \item $\mathcal{M},w \Vdash \top$ always;
                \item $\mathcal{M},w \Vdash e$ iff $w \in {\bf V}(e)$;
                \item $\mathcal{M},w \Vdash \neg \varphi$ iff $\mathcal{M},w \not\Vdash \varphi$;
                \item $\mathcal{M},w \Vdash \varphi_{1} \wedge \varphi_{2}$ iff $\mathcal{M},w \Vdash \varphi_{1}$ and $\mathcal{M},w \Vdash \varphi_{2}$;
                \item $\mathcal{M},w \Vdash \langle \pi \rangle \varphi$ iff there is $w' \in W$ such that $w R_{\pi} w'$ and $\mathcal{M},w' \Vdash \varphi$.
            \end{itemize}
        \end{definition}

        It is important to notice that we are defining the relation $R_{\pi_1 \pp \pi_2}$ as the union of the relations of each term of its expansion.


    \subsection{Axiomatization}\label{ax-ddyl}
        The axiomatization presented here combines axioms from two logics. One is a Propositional Dynamic Logic with Communication Action and Parallel Operator \cite{jolc14} and the other is a Dolev-Yao Multi-agent Epistemic Logic presented in \cite{benevidesetal2018dymael}. Let $\pi$, $\pi_1$ and $\pi_2$ be processes and $\alpha_i \in \acao$, for $1 \leq i\leq n$:

        \subsubsection{Axioms}
            \begin{enumerate}
                \item {\it All propositional logic tautologies},
                \item $[ \pi ](p \rightarrow q) \rightarrow ([ \pi ]p \rightarrow [ \pi ]q)$,
                \item $[ \pi_{1} ; \pi_{2} ] p \leftrightarrow [ \pi_{1} ] [  \pi_{2} ] p$,
                \item $\langle \pi_{1} + \pi_{2} \rangle p \leftrightarrow (\langle \pi_{1} \rangle p \lor \langle \pi_{2} \rangle p)$,
                \item\label{ax-par} $ \langle \pi_1 \pp \pi_2 \rangle p  \leftrightarrow \langle\alpha_1. \gamma_1 \rangle p \lor ... \lor \langle \alpha_n.\gamma_n \rangle p$, where  $Exp(\pi_1 \pp \pi_2) = \alpha_1. \gamma_1 + ... + \alpha_n.\gamma_n$,
                \item\label{ax-end} $[END] p  \leftrightarrow p$,
                \item\label{ax-ddyl-encryption} $m \land k_a \rightarrow \{m\}_{k_a}$ 
                \item\label{ax-ddyl-decryption} $\{m\}_{k_a} \land \bar{k_a} \rightarrow m$ 
                \item\label{ax-ddyl-pair} $m \land n \leftrightarrow (m, n)$ 
                \item\label{ax-tau} $[\tau(m)] p \to \langle \tau(m) \rangle p$,
                \item\label{ax-tau-p-m} $[\tau(m)] p \to [\tau(m)] (p\land m)$.
            \end{enumerate}


            \vspace{-15pt}

        \subsubsection{Inference Rules}
            $$\textrm{M.P.} ~ \frac{\varphi, \varphi\rightarrow \psi}{\psi}
                \qquad \textrm{U.G.} ~ \frac{\vdash \varphi}{ \vdash [ \pi ]\varphi}
                \qquad \textrm{SUB.} ~ \frac{\varphi}{\rho \varphi},$$

            where $\rho$ is a map uniformly substituting formulas for propositional variables.

            \

            (PCSub) If $\vdash \langle \pi \rangle p \leftrightarrow \langle \tau \rangle p$, then $\vdash \langle \pi \pp \gamma \rangle p \leftrightarrow \langle \tau \pp \gamma \rangle p$ and $\vdash \langle \gamma \pp \pi \rangle p \leftrightarrow \langle \gamma \pp \tau \rangle p$

            \

            Axioms 1, 2, 3 and 4 and the inference rules M.P., U.G. and SUB. are standard in PDL for regular programs \cite{hkt,goldblatt92,pmy}. Axioms 5 is the Expansion rule. Inference rule PCSub enforces some desirable property of the parallel composition operator\footnote{The rule (PCSub) is not written as an inference rule like (M.P), (U.G.) and (SUB.) only for clarity.}. Axioms 7, 8 and 9 enforce the semantical properties of the valuation function (conditions \ref{condition:ddylModelEncription}, \ref{condition:ddylModelDecryption} and \ref{condition:ddylModelPair} of Definition \ref{frameddyl}).


        \subsubsection{Soundness and Completeness}
            The axiomatization presented here combines axioms from two logics. One is a Propositional Dynamic Logic with Communication Action and Parallel Operator \cite{jolc14} and the other is a Dolev-Yao Multi-agent Epistemic Logic presented in \cite{benevidesetal2018dymael}. Both works prove soundness and completeness for their proposed axiomatic systems. As our axiomatization is a combination of both logics, the proofs of soundness and completeness follow straightforward from them.
            
\begin{theorem} DDYL is sound and complete with respect to the class of  DDYL
\emph{models}.
\begin{proof} The proof o soundness and completeness follow straightforward from \cite{benevidesetal2018dymael} and \cite{jolc14}.

\end{proof}
\end{theorem}

\section{Tableaux Calculus for Dynamic Dolev-Yao Logic}\label{TCDDYL}
    In this section we propose a Tableaux Calculus for Dynamic Dolev-Yao Logic. We extend the method presented in section \ref{Tableaux} by adding some rules.

    \subsection{Rules}\label{TCDDYLRules}
    The following tableaux rules indicate the correspondence between our axioms of \emph{encryption}, \emph{decryption} and \emph{pair composition $\&$ decomposition} and the semantical properties of the valuation function ${\bf V}$ (conditions \ref{condition:ddylModelEncription}, \ref{condition:ddylModelDecryption} and \ref{condition:ddylModelPair} of Definition \ref{frameddyl}):

    {\small

    \

    \centerline{
        R$_\text{Dec}$ $\dfrac{\genfrac{}{}{0pt}{}{}{\genfrac{}{}{0pt}{0}{\sigma: \{m\}_{k_a}}{\sigma: \bar{k_a}}}}{\sigma: m}$
        \quad
        R$_\text{Enc}^\neg$ $\dfrac{\sigma: \neg \{m\}_{k_a}}{\sigma: \neg m \quad \sigma: \neg k_a}$
        \quad
        R$_\text{Pair}$ $\dfrac{\sigma: (m, n)}{\genfrac{}{}{0pt}{}{}{\genfrac{}{}{0pt}{0}{\sigma: m}{\sigma: n}}}$
        \quad
        R$_\text{Pair}^\neg$ $\dfrac{\sigma: \neg (m, n)}{\sigma: \neg m \quad \sigma: \neg n}$
    }

    }

    \

    where $m, \{m\}_{k_a}, n, (m,n) \in E$ and $k_a, \overline{k}_a \in \mathscr{K}$. For the parallel operator:

    {\small

    \

    \centerline{
        R$_\text{Par}$ $\dfrac{\sigma:\langle \pi \pp \pi' \rangle \varphi}{{\sigma: \langle  \pi_1  \rangle \varphi}\quad {\cdots} \quad {\sigma: \langle  \pi_n  \rangle \varphi} }$
        \quad
        R$_\text{Par}^\neg$ $\displaystyle{{\sigma: \neg \langle \pi \pp \pi' \rangle \varphi} \over \displaystyle{{\sigma: \neg \langle \pi_1 \rangle \varphi} \atop {\displaystyle{{\vdots} \atop {\sigma: \neg\langle \pi_n \rangle \varphi}}}}}$
    }

    \

    }

    with $Exp(\pi \pp \pi') = \pi_1 + \pi_1 + \ldots +\pi_n$. Finally, to deal with communication actions, we also add the following rules:

    \

    {\small

    \centerline{
        R$_\text{D}$ $\dfrac{\sigma: \neg \langle \tau(m) \rangle \varphi}{\sigma: \langle \tau(m) \rangle \neg \varphi}$
        \qquad
        R$_\text{END}$ $\dfrac{\sigma: \langle END \rangle \varphi}{\sigma: \varphi}$
    }

    \

    \

    \centerline{
        R$_{\langle \tau \rangle}$ $\dfrac{\sigma: \ \langle \tau(m) \rangle \varphi}{\sigma.\tau(m).n: \ m \land \varphi}$, with $\sigma.\tau(m).n$ new in the branch
    }

    }

    \

    \

    {\small

    \centerline{
        R$_{\langle \tau \rangle}^\neg$ $\dfrac{\sigma: \ \neg \langle \tau(m) \rangle \varphi}{\sigma.\tau(m).n: \ m \land \neg \varphi}$, with $\sigma.\tau(m).n$ already present in the branch
    }

    }

    \

    where $\tau(m) \in {\cal A}$.

    \subsection{Soundness}
        The soundness proof for our method is similar to the one presented in \cite{DEGIACOMO2000}. We need some definitions to follow some established steps \cite{DEGIACOMO2000,FITTING1983,MASSACCI1994}:

        \begin{definition}\label{def:tableauxProperties}
            Let $\Gamma$ be a set of formulas:

            \begin{enumerate}
                \item we denote $\mathcal{M}, s \Vdash \Gamma$ to represent $\mathcal{M},s \Vdash \varphi$, for all $\varphi \in \Gamma$;

                \item\label{property:tableauxModel} we say $\Gamma$ is satisfiable if there exists a model $\mathcal{M}$ and some possible state $s \in W$ such that $\mathcal{M},s \Vdash \Gamma$;

                \item\label{property:tableauxBranch} a tableau branch is satisfiable if the set of all its formulas is satisfiable. A tableau is satisfiable if at least one branch is satisfiable.
            \end{enumerate}
        \end{definition}

        \begin{lemma}
            The rules of the tableaux method preserve satisfiability. That is, if a tableau $\mathscr{T}$ is satisfiable then the tableau resulting from the application of a rule to $\mathscr{T}$ is satisfiable.
        \end{lemma}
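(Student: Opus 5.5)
We follow the standard approach of \cite{DEGIACOMO2000,FITTING1983}. Satisfiability of a branch is interpreted through prefixes: a branch $\theta$ is satisfiable if there are a model $\mathcal{M}=(W,{\bf R},{\bf V})$ and a mapping $\iota$ from the prefixes present in $\theta$ to $W$ such that two conditions hold. First, whenever $\sigma.\alpha.n$ is present, $(\iota(\sigma),\iota(\sigma.\alpha.n))\in R_\alpha$. Second, $\mathcal{M},\iota(\sigma)\Vdash\varphi$ for every $\sigma:\varphi$ in $\theta$. (We make this reading of Definition~\ref{def:tableauxProperties} explicit.) Let $\mathscr{T}$ be satisfiable, so some branch $\theta$ is satisfiable via $(\mathcal{M},\iota)$, and suppose a rule is applied to a formula on a branch. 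If the branch is not $\theta$, then $\theta$ is unchanged and nothing needs proof. Otherwise we show, by cases on the rule, that at least one of the extensions of $\theta$ is satisfiable, using the same $\mathcal{M}$ and $\iota$ or an extension of $\iota$.

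\textbf{Static rules.} For R$_\land$, R$_\land^\neg$, R$_\text{Dneg}$, R$_\text{Seq}$, R$_\text{Seq}^\neg$, R$_\text{Test}$, R$_\text{Test}^\neg$, R$_\text{Choice}$ and R$_\text{Choice}^\neg$, we keep $\iota$ fixed and apply the clauses of Definition~\ref{sat-ddyl} together with $R_{\pi_1;\pi_2}=R_{\pi_1}\circ R_{\pi_2}$ and $R_{\pi_1+\pi_2}=R_{\pi_1}\cup R_{\pi_2}$. For the branching rules, we take the side whose conclusion is true at $\iota(\sigma)$.

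For R$_\text{Dec}$ we use condition~\ref{condition:ddylModelDecryption} of Definition~\ref{frameddyl}. For R$_\text{Enc}^\neg$ we use the contrapositive of condition~\ref{condition:ddylModelEncription}: since $\iota(\sigma)\notin V(\{m\}_{k_a})$, the state is not in both $V(m)$ and $V(k_a)$. For R$_\text{Pair}$ and R$_\text{Pair}^\neg$ we use condition~\ref{condition:ddylModelPair}. For R$_\text{END}$ we use $R_{END}=\mathrm{Id}_W$.

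For R$_\text{Par}$ and R$_\text{Par}^\neg$ we use that $R_{\pi\pp\pi'}$ is by definition the union of the relations of the summands of $Exp(\pi\pp\pi')$. A diamond over the union is therefore witnessed by some summand, which gives the satisfiable branch. The negated diamond over the union transfers to every summand.

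For R$_\text{D}$ we need $\neg\langle\tau(m)\rangle\varphi\to\langle\tau(m)\rangle\neg\varphi$ at $\iota(\sigma)$. Since $\neg\langle\tau(m)\rangle\varphi$ is $[\tau(m)]\neg\varphi$, this reduces to seriality of $R_\tau$ (condition ii), read as seriality of each $R_{\tau(m)}$, which is exactly what axiom~\ref{ax-tau} encodes.

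\textbf{Prefix-generating rules.} For R$_{\langle A\rangle}$ and R$_{\langle\tau\rangle}$, the formula $\langle\alpha\rangle\varphi$ holds at $\iota(\sigma)$, so there is a $w'$ with $\iota(\sigma)R_\alpha w'$ and $w'\Vdash\varphi$. Because $\sigma.\alpha.n$ is new, we may extend $\iota$ by setting $\iota(\sigma.\alpha.n)=w'$ without conflict. For $\tau(m)$, condition iii gives $w'\Vdash m$, so $m\land\varphi$ holds at $w'$.

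For R$_{\langle A\rangle}^\neg$ and R$_{\langle\tau\rangle}^\neg$, the prefix $\sigma.\alpha.n$ is already present. The invariant gives $\iota(\sigma)R_\alpha\iota(\sigma.\alpha.n)$, so $[\alpha]\neg\varphi$ yields $\neg\varphi$ there, and condition iii again yields $m$.

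\textbf{Main obstacle.} The delicate points are the nonstandard semantic facts rather than the case analysis. The first is the R$_\text{D}$ case, where we must ensure that seriality holds for each labelled $\tau(m)$; we would justify this by appealing to axiom~\ref{ax-tau} as the intended frame condition. The second is the R$_\text{Par}$ case, where we must check that the tableau's $Exp$ (listed as $\pi_1+\ldots+\pi_n$) coincides with the summands used in Definition~\ref{frameddyl}. We would handle this by citing Theorem~\ref{teo:EL} and the definition of $R_{\pi_1\pp\pi_2}$ verbatim.
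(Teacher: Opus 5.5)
Your proof is correct and uses the paper's skeleton. You split on whether the rule touches the satisfiable branch, then check each rule against the frame and valuation conditions of Definition~\ref{frameddyl}. The paper phrases each case as ``soundness of axiom $n$''; you go straight to the corresponding model condition, which amounts to the same thing.

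The real difference is the invariant. The paper uses Definition~\ref{def:tableauxProperties} as stated, with one state $s$ satisfying every formula of the branch, and hands the modal rules off to the literature. You instead interpret prefixes by a map $\iota$ that respects the accessibility relations. Under your reading the lemma actually holds. Under the single-state reading, R$_{\langle A\rangle}$ turns the satisfiable branch $1:\langle A\rangle p,\ 1:\neg p$ into one containing $1.A.1:p$, and $\{\langle A\rangle p,\neg p,p\}$ has no model at one state. Your $\iota$ also makes the R$_{\langle\tau\rangle}$ and R$_{\langle\tau\rangle}^\neg$ cases precise: condition iii is applied at $\iota(\sigma.\tau(m).n)$, which the paper only gestures at.

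You also use the right condition for R$_\text{Dec}$, namely decryption (condition~\ref{condition:ddylModelDecryption}). The paper invokes the encryption axiom~\ref{ax-ddyl-encryption} there, which yields $\{m\}_k$ rather than the rule's conclusion $m$.

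One small addition: DDYL models as defined have no clause for tests. For R$_\text{Test}$ and R$_\text{Test}^\neg$ you should explicitly import $R_{\varphi?}=\{(w,w)\mid \mathcal{M},w\Vdash\varphi\}$ from the PDL semantics.
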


        \begin{proof}
            Let $\mathscr{T}$ be a satisfiable tableau. By property \ref{property:tableauxBranch} of Definition \ref{def:tableauxProperties}, $\mathscr{T}$ has at least one satisfiable branch, although it could have unsatisfiable ones. So, either the rule is applied to a satisfiable branch or to an unsatisfiable one.

            \textbf{First case:} if the rule is applied to an unsatisfiable branch, each originally satisfiable branch remains unchanged. Therefore, the tableau resulting from the application of a rule is satisfiable.

            \textbf{Second case:} if the rule is applied to a satisfiable branch $\theta$, which consists of a set of formulas $\Gamma$ and some specific formulas $\gamma$ and $\delta$ which the rule is applied. As $\theta$ is satisfiable, by property \ref{property:tableauxModel} of Definition \ref{def:tableauxProperties}, there exists a model $\mathcal{M}$ and a possible state $s \in W$ such that $\mathcal{M},s \Vdash \Gamma$, in particular, $\mathcal{M},s \Vdash \gamma$ and $\mathcal{M},s \Vdash \delta$. Let's $\theta'$ be the new branch obtained by the application of an inference rule to $\theta$. We have the following cases for each possible structure of $\gamma$ or $\delta$:

            \begin{itemize}
                \item for $\gamma$ or $\delta$ of type $\varphi \land \psi$, $\neg (\varphi \land \psi)$ or $\neg \neg \varphi$, and rules R$_\text{Seq}$, $_\text{Seq}^\neg$, R$_\text{Test}$, R$_\text{Test}^\neg$R$_\text{Choice}$, R$_\text{Choice}^\neg$, R$_{\langle A \rangle}$, R$_{\langle A \rangle}^\neg$, the proof can be found in tableaux for modal logics literature \cite{FITTING1983,GORE1999,MASSACCI1994}.

                \item R$_\text{Dec}$: for $\gamma$ of type $m$ and for $\delta$ of type $k$, where $m, k \in \Gamma$, since $\mathcal{M},s \Vdash m$ and $\mathcal{M},s \Vdash k$, that is, $\mathcal{M},s \Vdash m \land k$, by the soundness of axiom \ref{ax-ddyl-encryption} of section \ref{ax-ddyl}, we have $\mathcal{M},s \Vdash \{m\}_k$. Therefore, $\theta'$ is satisfiable.

                \item R$_\text{Enc}^\neg$: for $\gamma$ or $\delta$ of type $\neg \{m\}_k$ and $\mathcal{M},s \Vdash \neg  \{m\}_k$. By the contrapositive of axiom \ref{ax-ddyl-encryption} of section \ref{ax-ddyl} and its soundness, we have $\mathcal{M},s \Vdash \neg (m \land k)$ and also $\mathcal{M},s \Vdash \neg m \lor  \neg k$. Suppose $\mathcal{M},s \Vdash \neg m $, then $\theta'$ is satisfiable. Suppose $\mathcal{M},s \Vdash \neg k $, then $\theta'$ is also satisfiable. Therefore, $\theta'$ is satisfiable.

                \item the cases for rules R$_\text{Pair}$ and R$_\text{Pair}^\neg$ are analogous to the cases for rules R$_\text{Dec}$ and R$_\text{Enc}^\neg$, respectively, but using axiom \ref{ax-ddyl-pair}.

                \item R$_\text{Par}$ and R$_\text{Par}^\neg$: follows straightforward from the soundness of axiom \ref{ax-par} \cite{jolc14}.

                \item R$_\text{END}$ and R$_\text{D}$: follows from the soundness of axioms \ref{ax-end} and \ref{ax-tau}, and the conditions i and ii in Definition \ref{frameddyl}, respectively.

                \item R$_{\langle \tau \rangle}$ and R$_{\langle \tau \rangle}^\neg$: follows from the soundness of axiom \ref{ax-tau-p-m} and the condition iii in Definition \ref{frameddyl}.
            \end{itemize}
        \end{proof}

        The soundness of our tableaux method follows straightforward from the above lemma. If a formula $\neg \alpha$ has a closed tableau, then it is unsatisfiable. Therefore $\alpha$ must be a valid formula.

    \subsection{Completeness}
        The completeness proof for our method is inpired  by \cite{DEGIACOMO2000}, following some established steps \cite{DEGIACOMO2000,FITTING1983,GORE1999,MASSACCI1994}. First, we need some definitions:

        \begin{definition}\label{def:formulaTypes}
            Formulas of the form $X \land Y$, $\neg \neg X$, $(m, n)$ or occurrences of $m$ and $k$ are called type-$\alpha$ formulas, while every formulas of the form $\neg (X \land Y)$, $\neg (m, n)$ or $\neg \{m\}_k$ are called type-$\beta$ formulas. The components $\alpha_1$ and $\alpha_2$ from a type-$\alpha$ formula and the components $\beta_1$ and $\beta_2$ from a type-$\beta$ formula are given in tables \ref{table:tableauxComponentsAlpha} and \ref{table:tableauxComponentsBeta}:

            \begin{table}
            \parbox{.45\linewidth}{
                \centering
                \begin{tabular}{c|c|c}
                    $\alpha$ & $\alpha_1$ & $\alpha_2$ \\
                    \hline
                    $X \land Y$ & $X$ & $Y$ \\
                    $\neg (X \lor Y)$ & $\neg X$ & $\neg Y$ \\
                    $\neg (X \to Y)$ & $X$ & $\neg Y$ \\
                    $\neg \neg X$ & $X$ & $X$ \\
                    $(m,n)$ & $m$ & $n$ \\
                    $\genfrac{}{}{0pt}{}{}{\genfrac{}{}{0pt}{0}{m}{k}}$ & $\{m\}_k$ & $\{m\}_k$ \\
                \end{tabular}
                \caption{Type-$\alpha$ formulas}
                \label{table:tableauxComponentsAlpha} }
           \hfill
            \parbox{.45\linewidth}{
                \centering
                \begin{tabular}{c|c|c}
                    $\beta$ & $\beta_1$ & $\beta_2$ \\
                    \hline
                    $X \lor Y$ & $X$ & $Y$ \\
                    $\neg (X \land Y)$ & $\neg X$ & $\neg Y$ \\
                    $X \to Y$ & $\neg X$ & $Y$ \\
                    $\neg (m,n)$ & $\neg m$ & $\neg n$ \\
                    $\neg \{m\}_k$ & $\neg m$ & $\neg \overline{k}$ \\
                \end{tabular}
                \caption{Type-$\beta$ formulas}
                \label{table:tableauxComponentsBeta} }
                \vspace{-10pt}
            \end{table}
        \end{definition}

        \begin{definition}\label{def:tableauxComplete}
            A branch $\theta$ of a tableau $\sigma$ is called \emph{complete} if it satisfies the following conditions (where $\Sigma$ is a set of formulas of $\theta$, $\gamma$ a specific formula and $\varphi$ and $\psi$ are Type-$\alpha$ and Type-$\beta$ formulas respectively):

            {\small

            \begin{enumerate}
                \item\label{cond:tableauxCompleteAlpha} if $(\sigma, \varphi) \in \Sigma$, then $(\sigma, \varphi_1) \in \Sigma$ and $(\sigma, \varphi_2) \in \Sigma$;

                \item\label{cond:tableauxCompleteBeta} if $(\sigma, \psi) \in \Sigma$, then $(\sigma, \psi_1) \in \Sigma$ or $(\sigma, \psi_2) \in \Sigma$;

                \item\label{cond:tableauxCompleteMod} if $(\sigma, \langle \pi \rangle \gamma) \in \Sigma$, then $(\sigma', \gamma) \in \Sigma$ for every tableau $\sigma'$ that occurs in $\Sigma$ and is accessible from $\sigma$;

                \item\label{cond:tableauxCompleteNMod} if $(\sigma, \neg \langle \pi \rangle \gamma) \in \Sigma$, then $(\sigma', \gamma) \in \Sigma$ for some tableau $\sigma'$ that is accessible from $\sigma$;

                \item\label{cond:tableauxCompleteBranch} every branch of any tableau which is accessible from $\theta$ is complete or closed as well.
            \end{enumerate}

            }
        \end{definition}

        \begin{definition}
            We say that a tableau $\mathscr{T}$ is \emph{completed} if every branch of $\sigma$ is complete or closed.
        \end{definition}

        So, if a branch $\theta$ of a tableau $\mathscr{T}$ is \emph{complete} and \emph{open}, then we have at least one open branch (that is also complete) per subordinated tableaux to $\theta$.

        \begin{theorem}\label{thrm:tableauxModel}
            Every complete and open branch of a tableau is satisfiable.
        \end{theorem}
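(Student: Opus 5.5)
We will prove Theorem \ref{thrm:tableauxModel} by the standard canonical-model construction from the branch itself, following \cite{DEGIACOMO2000,FITTING1983}. Let $\theta$ be a complete and open branch. We take as set of states $W$ the set of prefixes occurring in $\theta$, with copies identified with their originals: whenever $\sigma$ is a copy of a fully reduced $\sigma_0$, we redirect every transition into $\sigma$ to $\sigma_0$, the usual loop-back. For each basic action we set $\sigma R_{\alpha} \sigma'$ iff $\sigma' = \sigma.\alpha.n$ (or its redirected original). Here $\alpha$ ranges over atomic programs $A$ and joint actions $\tau(m)$. We put $R_{END}$ equal to the identity. To guarantee seriality of $R_\tau$ (condition ii) and to keep condition iii, we add a $\tau(m)$-successor wherever the branch produced none; we do this in a way that makes $m$ true there and adds no formula in conflict with the branch, for instance by a fresh state that is its own $\tau$-successor. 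Composite relations $R_{\pi_1;\pi_2}$, $R_{\pi_1+\pi_2}$ and $R_{\pi_1\pp\pi_2}$ are then defined exactly as in Definition \ref{frameddyl}. For the valuation we start with the atomic expressions: $\sigma \in V(e)$ iff $\sigma:e \in \theta$, for $e$ a propositional symbol or key. We then close under the three conditions of Definition \ref{frameddyl}: add $\{m\}_{k_a}$ wherever $m$ and $k_a$ hold, and let pairs hold exactly where both components hold.

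The first key step is to check that this closure does not make true, at some $\sigma$, an expression $e$ with $\sigma:\neg e \in \theta$. We do this by induction on the closure derivation. Suppose $\{m\}_{k_a}$ is forced at $\sigma$ while $\sigma:\neg\{m\}_{k_a}\in\theta$. By completeness condition \ref{cond:tableauxCompleteBeta} with rule R$_\text{Enc}^\neg$, $\theta$ contains $\sigma:\neg m$ or $\sigma:\neg k_a$. The induction hypothesis then contradicts the fact that $m$ and $k_a$ were forced true. Pairs are handled the same way using R$_\text{Pair}$ and R$_\text{Pair}^\neg$. Decryption (condition 2) is handled using R$_\text{Dec}$, which the $\alpha$-closure (condition \ref{cond:tableauxCompleteAlpha}) ensures is saturated. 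Openness of $\theta$ excludes the base case $\sigma:e$ and $\sigma:\neg e$ together.

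The second step is the truth lemma: for every $\sigma:\varphi\in\theta$ we have $\mathcal{M},\sigma\Vdash\varphi$. We prove this by induction on a measure that orders formulas by size, counting a parallel program by the size of its expansion $Exp$, which strictly decreases the program. The Boolean and type-$\alpha$/$\beta$ cases use completeness conditions \ref{cond:tableauxCompleteAlpha} and \ref{cond:tableauxCompleteBeta}. Sequence, choice and test reduce by their rules to smaller formulas. The parallel case uses R$_\text{Par}$ and R$_\text{Par}^\neg$ together with the definition $R_{\pi\pp\pi'}=\bigcup R_{\alpha_i.\gamma_i}$. $END$ uses R$_\text{END}$ and reflexivity of $R_{END}$. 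Diamond formulas $\langle A\rangle\varphi$ and $\langle\tau(m)\rangle\varphi$ obtain a witness successor from R$_{\langle A\rangle}$ and R$_{\langle\tau\rangle}$; for $\tau$ the successor also contains $m$, which gives condition iii. Boxes $\neg\langle\alpha\rangle\varphi$ are propagated to all present successors by the negative rules, as conditions \ref{cond:tableauxCompleteMod} and \ref{cond:tableauxCompleteNMod} record.

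The main obstacle is the interaction of boxes with the loop-back redirection and with the added $\tau$-successors. For the loop-back, since a copy has exactly the same formula set as its fully reduced original, a box at the predecessor that reaches the copy also holds at the original, so the redirection is harmless. This is where clause 2 of $\pi$-completeness is used. For the artificially added $\tau$-successors we use R$_\text{D}$. Any $\sigma:\neg\langle\tau(m)\rangle\varphi$ yields $\sigma:\langle\tau(m)\rangle\neg\varphi$, so a genuine $\tau(m)$-successor already exists in the branch whenever a box over $\tau(m)$ constrains $\sigma$. The fresh successor therefore only needs to satisfy $m$, and it receives the minimal valuation closed under the conditions. Having both properties, $\mathcal{M}$ is a DDYL model satisfying all formulas of $\theta$, so $\theta$ is satisfiable.
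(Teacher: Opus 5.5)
Your valuation has a genuine gap. You seed $V$ only with propositional symbols and keys and then close upward. In your model, therefore, $\{m\}_{k_a}$ holds at $\sigma$ only when $m$ and $k_a$ both do.

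\emph{Why this fails.} No rule decomposes a positive $\sigma:\{m\}_{k_a}$, and none could do so soundly, since DDYL models only require $V(m)\cap V(k_a)\subseteq V(\{m\}_{k_a})$. So ciphertexts asserted on the branch are generally false in $\mathcal{M}$. For example, the one-formula branch $1:\{p\}_{k_a}$ is complete and open, yet your model falsifies it. The same defect breaks condition iii. Whenever the message of a $\tau(m)$-step contains a ciphertext whose plaintext or key is not on the branch (e.g.\ $m=(A,\{M\}_{k_Z},Z)$ in the paper's example), $m$ is false at its own $\tau(m)$-successor, so $\mathcal{M}$ is not a DDYL model. A symptom is that decryption closure never adds anything in your construction, so R$_\text{Dec}$ does no work.

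\emph{Repair.} Make branch ciphertexts base facts, for instance by recursion on expressions: an atom holds at $\sigma$ iff $\sigma:e\in\theta$; $(m,n)$ holds iff $m$ and $n$ hold; $\{m\}_{k_a}$ holds iff $\sigma:\{m\}_{k_a}\in\theta$ or both $m$ and $k_a$ hold.
\begin{itemize}
\item Conditions 1 and 3 are then immediate.
\item For condition 2, $\bar{k_a}$ is atomic and hence on the branch. So either $m$ already holds, or $\{m\}_{k_a}$ is on the branch and R$_\text{Dec}$ puts $m$ there. An induction on expressions using R$_\text{Pair}$ shows that every expression on the branch holds.
\item Your argument from openness, R$_\text{Enc}^\neg$ and R$_\text{Pair}^\neg$ then shows, by induction on $e$, that $\sigma:\neg e\in\theta$ makes $e$ false.
\end{itemize}
If you instead seeded all branch expressions and took the least closure, your one-line decryption case would not suffice. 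R$_\text{Dec}$ only fires on premises actually on the branch, not on derived ones, so you would need a Dolev--Yao normalization argument (analysis before synthesis).

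\emph{Comparison with the paper.} The paper avoids your problem by reading $V$ off the branch for every expression, atomic or not. However, it never checks that this valuation satisfies conditions 1--3: with $\sigma:m$ and $\sigma:k_a$ but not $\sigma:\{m\}_{k_a}$ on the branch, its default-false choice violates condition 1. The recursive definition above gets both right.

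\emph{Two smaller points.} First, your END case only covers $\langle END\rangle\varphi$. The calculus has no rule for $\sigma:\neg\langle END\rangle\varphi$. In the tableau for the valid formula $p\rightarrow\langle END\rangle p$, the branch $\{1:p,\ 1:\neg\langle END\rangle p\}$ is complete and open, yet unsatisfiable when $R_{END}$ is the identity. You need to add the rule from $\sigma:\neg\langle END\rangle\varphi$ to $\sigma:\neg\varphi$; the paper has the same omission. Second, a fresh state that is its own $\tau(m')$-successor for every $m'$ must, by condition iii, make every expression true, not just $m$. The all-true valuation satisfies conditions 1--3, so this is easily fixed.
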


        \begin{proof}
            Let $\theta$ be a complete and open branch of a tableau $\mathscr{T}$ and $\Sigma$ be a set of formulas of $\theta$ and of the tableaux $\mathscr{T}_1, \mathscr{T}_2, \dots$ (which are recursively subordinated to $\theta$). We construct a model $\mathscr{M}$ where $S$ is the set of tableaux $\{\mathscr{T}, \mathscr{T}_1, \mathscr{T}_2, \dots\}$, $\sim_{a}$ is built from the pairs $(\mathscr{T}_1, \mathscr{T}_2)$, such that $\mathscr{T}_2$ is subordinated to $\mathscr{T}_1$ and satisfying the following conditions, where $e \in E$ is an expression and the prefixes $\sigma, \sigma_1, \sigma_2, \dots$ are associated to $\{\mathscr{T}, \mathscr{T}_1, \mathscr{T}_2, \dots\}$, respectively:

            \begin{enumerate}
                \item if $(\sigma, e) \in \Sigma$, then $V(\sigma, e) = T$;
                \item if $(\sigma, \neg e) \in \Sigma$, then $V(\sigma, e) = F$;
                \item if $(\sigma, e) \not \in \Sigma$ and $(\sigma, \neg e) \not \in \Sigma$, then $V(\sigma, e) = T$ can have any value. Let's choose $F$ by default.
            \end{enumerate}

            Now, for any $(\sigma, \gamma) \in \Sigma$, we have $\mathcal{M},s \Vdash \gamma$, where $\gamma$ is a formula and $s$ a possible state associated to $\sigma$. According to $\gamma$ structure:

            \begin{itemize}
                \item for $(\sigma, p), (\sigma, \varphi), (\sigma, \psi), (\sigma, \langle \pi \rangle \gamma$) and $(\sigma, \neg \langle \pi \rangle \gamma)$ the proof is standard \cite{FITTING1983,GORE1999,MASSACCI1994}. We show the cases for the new rules presented in section \ref{TCDDYLRules} below;

                \item The pair $(\sigma, \{m\}_k) \in \Sigma$, for some prefix $\sigma$. By condition \ref{cond:tableauxCompleteAlpha} of Definition \ref{def:tableauxComplete}, we have $(\sigma, m) \in \Sigma$ and $(\sigma,k) \in \Sigma$ and by the induction hypothesis $\mathcal{M},s \Vdash m$ and $\mathcal{M},s \Vdash k$ and also $\mathcal{M},s \Vdash m \land k$, by the soundness of axiom \ref{ax-ddyl-encryption} of section \ref{ax-ddyl}, we have $\mathcal{M},s \Vdash \{m\}_k$;

                \item The pair $(\sigma, \neg \{m\}_k) \in \Sigma$, for some prefix $\sigma$. By condition \ref{cond:tableauxCompleteBeta} of Definition \ref{def:tableauxComplete}, we have $(\sigma, \neg m) \in \Sigma$ or $(\sigma, \neg k) \in \Sigma$ and by the induction hypothesis $\mathcal{M},s \Vdash \neg m$ or $\mathcal{M},s \Vdash \neg k$ and also $\mathcal{M},s \Vdash \neg m \lor \neg k$ and $\mathcal{M},s \Vdash \neg (m \land  k)$, by the soundness of the contrapositive of axiom \ref{ax-ddyl-encryption} of section \ref{ax-ddyl}, we have $\mathcal{M},s \Vdash \neg \{m\}_k$;

                \item the cases for rules R$_\text{Pair}$ and R$_\text{Pair}^\neg$ are analogous to the cases for rules R$_\text{Dec}$ and R$_\text{Enc}^\neg$, respectively, but using axiom \ref{ax-ddyl-pair} of section \ref{ax-ddyl}.
            \end{itemize}

            Therefore, our model satisfies $\Sigma$.
        \end{proof}

        \begin{theorem}
            If a formula $\gamma$ is valid, then $\gamma$ has a proof by tableaux method.
        \end{theorem}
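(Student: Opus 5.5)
We argue by contraposition, following the standard route of \cite{DEGIACOMO2000,FITTING1983,MASSACCI1994}. Suppose $\gamma$ has no tableau proof, i.e.\ no tableau starting from $1:\neg\gamma$ closes. We then build a tableau by a systematic procedure and show that some branch of it is complete and open. Theorem~\ref{thrm:tableauxModel} makes that branch satisfiable, so $\neg\gamma$ holds at the state associated with the root prefix $1$, and $\gamma$ is not valid.

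\textbf{Step 1: a fair systematic procedure.} Starting from $1:\neg\gamma$, we apply the rules in the following priority order.
\begin{enumerate}
\item First the static rules on a prefix: R$_\land$, R$_\land^\neg$, R$_\text{Dneg}$, R$_\text{Pair}$, R$_\text{Pair}^\neg$, R$_\text{Dec}$, R$_\text{Enc}^\neg$, R$_\text{Seq}$, R$_\text{Seq}^\neg$, R$_\text{Test}$, R$_\text{Test}^\neg$, R$_\text{Choice}$, R$_\text{Choice}^\neg$, R$_\text{END}$, R$_\text{D}$.
\item Then R$_\text{Par}$ and R$_\text{Par}^\neg$, which rewrite a parallel modality through $Exp(\pi\pp\pi')$ into a finite choice of prefixed actions $\alpha_i.\gamma_i$. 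These reduce to choice, sequencing and basic actions.
\item Finally the existential rules R$_{\langle A\rangle}$ and R$_{\langle\tau\rangle}$, which create a new prefix. These are followed immediately by every applicable universal rule R$_{\langle A\rangle}^\neg$ and R$_{\langle\tau\rangle}^\neg$ on that new prefix.
\end{enumerate}
Each rule is applied at most once per formula occurrence and prefix. We stop expanding a prefix once it is a copy of a shorter fully reduced one, which is the blocking condition in the definition of $\pi$-completed branch. Since the language has no iteration, every program rewrite strictly decreases program size. $Exp$ yields finitely many summands whose continuations are strictly smaller processes. So each branch contains only finitely many distinct formulas, taken from a Fischer--Ladner-style closure of $\neg\gamma$ extended with the $Exp$-residuals and the expressions $m$, $\{m\}_k$, $k$, $(m,n)$ occurring in it. Blocking then guarantees that every branch becomes $\pi$-completed or contradictory after finitely many steps, and by König's lemma the tableau is finite.

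\textbf{Step 2: an open branch is complete.} By hypothesis the finished tableau is not closed, so some branch $\theta$ is non-contradictory and $\pi$-completed. We check the conditions of Definition~\ref{def:tableauxComplete} one by one.
\begin{itemize}
\item Conditions~\ref{cond:tableauxCompleteAlpha} and~\ref{cond:tableauxCompleteBeta} hold because all static rules, including the Dolev--Yao ones, were applied exhaustively to every prefix.
\item Condition~\ref{cond:tableauxCompleteNMod}, the existential case, holds because R$_{\langle A\rangle}$ and R$_{\langle\tau\rangle}$ produced a successor. When the prefix is blocked, we redirect the successor edge to the fully reduced original $\sigma_0$, which carries the same formula set.
\item Condition~\ref{cond:tableauxCompleteMod}, the universal case, holds because the universal rules were applied to every present successor, including redirected ones.
\item The seriality of $R_\tau$ and the reflexivity of $R_{END}$ required of the model are secured by R$_\text{D}$ and R$_\text{END}$.
\item Condition iii of Definition~\ref{frameddyl}, that $\tau(m)$-successors satisfy $m$, is secured because both $\tau$-rules add $m$ to the new prefix.
\end{itemize}

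\textbf{Main obstacle.} The hard step is the loop-check in Step 1 and its interaction with Theorem~\ref{thrm:tableauxModel}. Two things must be shown. First, the expansion of $\pi\pp\pi'$ introduces only finitely many new programs along a branch, since residual processes are subterms-modulo-$Exp$. Second, redirecting blocked prefixes to their copies preserves the accessibility conditions and the valuation constraints. For the first, we would prove a lemma that the set of processes reachable by $Exp$-rewriting from any subprogram of $\gamma$ is finite, by induction on the transition rules of Table~\ref{tab:semccs-par}. For the second, we would argue as in \cite{DEGIACOMO2000} that a copy has identical $\mathcal{S}/\sigma$, so the model built in Theorem~\ref{thrm:tableauxModel} can identify the two states without breaking any satisfaction clause. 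Given these, the open complete branch yields a model of $\neg\gamma$, contradicting validity, so $\gamma$ has a closed tableau.
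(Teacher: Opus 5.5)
Your proposal is correct and takes the same route as the paper: start from a completed tableau for $\neg\gamma$, note that an open branch would make $\neg\gamma$ satisfiable by Theorem~\ref{thrm:tableauxModel}, and conclude that validity of $\gamma$ forces the tableau to close. The one difference is that the paper simply assumes such a completed tableau exists, whereas you also sketch how a terminating systematic procedure produces one whose open branches meet Definition~\ref{def:tableauxComplete}.
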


        \begin{proof}
            Let $\mathscr{T}$ be a completed tableau, started with $\neg \gamma$. If it is open, then $\neg \gamma$ is satisfiable by theorem \ref{thrm:tableauxModel}. So, $\gamma$ cannot be valid. Therefore, if $\gamma$ is valid, then $\mathscr{T}$ is closed and $\gamma$ has a proof by tableaux method.
            \vspace{-5pt}
        \end{proof}

        \subsection{Termination property}
        \label{sec:ddylTableauxTermination}

        Finally, we present the termination argument for our method, based on \cite{MASSACCI2000}.

        \vspace{-10pt}

        \subsubsection{Classical and modal rules}
            \label{subsec:dymaelTableauxTerminationClassicModalRules}

            To guarantee the termination of the proof search, a \emph{loop checking} approach is used, a combination of techniques to apply any rule only after check if it was not applied already to the same antecedent. For the classsical tableaux rules, the following technique is sufficient to terminate:

            \begin{technique}
                \label{tech:dymaelTableauxReduced}

                Apply a rule to a prefixed formula $(\sigma, \gamma)$ in $\theta$ only if the formula is not already reduced according to Definition \ref{def-tableaux-branch-reduced}.
            \end{technique}

            To invocate the \emph{loop checking} we recall the notion of a \emph{fully reduced} prefix, from Definition \ref{def-tableaux-branch-reduced} and the following technique together with Technique \ref{tech:dymaelTableauxReduced} prove that we will always have a $\pi$-completed branch:

            \begin{technique}
                Select the prefixed formulas with the shortest prefix.
            \end{technique}

            As we have a $\pi$-completed branch, the next technique guarantees termination:

            \begin{technique}
                Check if the prefix of a $\pi$-formula is not a copy of a shorter prefix before reducing it.
            \end{technique}

        \subsubsection{Dynamic Dolev-Yao Logic rules}
            As rules R$_\text{Dec}$, R$_\text{Enc}^\neg$, R$_\text{Pair}$, R$_\text{Pair}^\neg$ always yield a smaller conclusion than the premises, that is, they are considered \emph{analytic} rules, the argument explained in Section \ref{subsec:dymaelTableauxTerminationClassicModalRules} is not interfered.

    \begin{example}
        Let's explore a simple case that agent A sends a encrypted message to the intruder Z. Considering that:


        \NumTabs{32}

        {\small

        $Exp(\pi_A \pp \pi_Z) =$ \tab $\tau(m).(\alpha(m_1).END \pp \oa(m_1).END) ~+$\\
        \indent \tab \tab \tab \tab \tab $\oa(m).(\alpha(m_1).END \pp \alpha(m).\oa(m_1).END) ~+$\\
        \indent \tab \tab \tab \tab \tab $\alpha(m).(\oa(m).\alpha(m_1).END \pp \oa(m_1).END)$

        }

        
        \noindent and the following protocol:


        \NumTabs{44}

        {\small

        $\pi_A$ \tab $ = \oa(m).\alpha(m_1).END$\\
        \indent $\pi_Z$ \tab $ = \alpha(m).\oa(m_1).END$\\
        \indent $m$ \tab $ = (A, \{M\}k_Z, Z)$\\
        \indent $m_1$ \tab $ = (Z, \{M\}k_A, A)$

        }


        \noindent we want to know if the intruder get access to the content of such message.

        \

        {\small

        \NumTabs{3}
        \noindent $1. ~ \bar{k_Z}$ \tab \tab [premise]\\
        $2. ~ k_A$ \tab \tab [premise]\\
        $3. ~ \neg \langle \pi_A \pp \pi_Z \rangle M$ \tab \tab [negated conclusion]\\
        $4. ~ \neg \langle \tau(m) ; ( \alpha(m_1) \pp \oa(m_1) ) \rangle M$ \tab \tab [R$_\text{Par}^\neg$ 3]\\
        $5. ~ \neg \langle \tau(m) \rangle \langle \alpha(m_1) \pp \oa(m_1) \rangle M$ \tab \tab [R$_\text{Seq}^\neg$ 4]\\
        $6. ~ \langle \tau(m) \rangle \neg \langle \alpha(m_1) \pp \oa(m_1) \rangle M$ \tab \tab [R$_\text{D}$ 5]\\
        $6.\tau(m).1. ~ m \wedge \neg \langle \alpha(m_1) \pp \oa(m_1) \rangle M$ \tab  \tab[R$_{\langle \tau \rangle}$ 6]\\
        $6.\tau(m).2. ~ m$ \tab \tab [R$_\wedge$ 6.$\tau(m)$.1]\\
        $6.\tau(m).3. ~ \neg \langle \alpha(m_1) \pp \oa(m_1) \rangle M$ \tab \tab [R$_\wedge$ 6.$\tau(m)$.1]\\
        $6.\tau(m).4. ~ \{ M \}_{k_Z}$ \tab \tab [R$_\text{Pair}$ 6.$\tau(m)$.2]\\
        $6.\tau(m).5. ~ M$ \tab \tab [R$_\text{Dec}$ 1, 6.$\tau(m)$.4]\\
        $6.\tau(m).6. ~ \langle \alpha(m_1) \pp \oa(m_1) \rangle \neg M$ \tab \tab [R$_\text{D}$ 6.$\tau(m)$.3]\\
        $6.\tau(m).7. ~ \langle \tau(m_1) ; (END \pp END) \rangle \neg M$ \tab [R$_\text{Par}$ 6.$\tau(m)$.6]\\
        $6.\tau(m).8. ~ \langle \tau(m_1) \rangle \langle END \pp END \rangle \neg M$ \tab  \tab[R$_\text{Seq}$ 6.$\tau(m)$.7]\\
        $6.\tau(m).8.\tau(m_1).1. ~ m_1 \wedge \langle END \pp END \rangle \neg M$ \tab [R$_{\langle \tau \rangle}$ 6.$\tau(m)$.8]\\
        $6.\tau(m).8.\tau(m_1).2. ~ m_1$ \tab \tab [R$_\wedge$ 6.$\tau(m)$.8.$\tau(m_1)$.1]\\
        $6.\tau(m).8.\tau(m_1).3. ~ \langle END \pp END \rangle \neg M$ \tab [R$_\wedge$ 6.$\tau(m)$.8.$\tau(m_1)$.1]\\
        \NumTabs{2}
        Left-hand side:\\
        $6.\tau(m).8.\tau(m_1).3_l.1. ~ \langle END \rangle \neg M$ \tab [R$_\text{Par}$ 6.$\tau(m)$.8.$\tau(m_1)$.3]\\
        $6.\tau(m).8.\tau(m_1).3_l.2. ~ \neg M$ \tab [R$_\text{END}$ 6.$\tau(m)$.8.$\tau(m_1)$.3$_l$.1]\\
        $\bot$ \tab [contradiction with 6.$\tau(m)$.5]\\
        Right-hand side:\\
        $6.\tau(m).8.\tau(m_1).3_r.1. ~ \langle END \rangle \neg M$ \tab [R$_\text{Par}$ 6.$\tau(m)$.8.$\tau(m_1)$.3]\\
        $6.\tau(m).8.\tau(m_1).3_r.2. ~ \neg M$ \tab [R$_\text{END}$ 6.$\tau(m)$.8.$\tau(m_1)$.3$_r$.1]\\
        $\bot$ \tab [contradiction with 6.$\tau(m)$.5]

        }

        Since all the branches are closed, the tableau is closed and $\langle \pi_A \pp \pi_Z \rangle M$.
        \vspace{-3pt}
    \end{example}

\section{Conclusion}\label{conclusion}
    In this work we presented a Dynamic Logic with Parallel operator to verify authenticity and safety in cryptographic protocols. We made this by extending a Dynamic Logic with some concepts based on the Dolev-Yao model. We also provided a tableaux calculus for this logic, proving its termination, soundness and completeness.

    The axiomatization presented here combines axioms from two logics. One is a Propositional Dynamic Logic with Communication Action and Parallel Operator \cite{jolc14} and the other is a Dolev-Yao Multi-agent Epistemic Logic presented in \cite{benevidesetal2018dymael}. As our logic is an extension of the logics presented in \cite{jolc14,tcs/Benevides17} and in \cite{benevidesetal2018dymael}, the proposed tableaux calculus can easily be adapted for these logics.

    As a future work, it would be interesting to extend the tableaux calculus with {\it iteration} operator and with {\it while} (a restrict form of iteration) and {\it deterministic} programs. We also would like to establish the computational complexity of the logics proposed for the model checking and validity problems.

    \subsubsection*{Acknowledgements}
        This study was financed in part by the Coordenação de Aperfeiçoamento de Pessoal de Nível Superior - Brasil (CAPES) - Finance Code 001, by the Brazilian Research Agencies (CNPq) and by the Rio de Janeiro State Research Foundation (FAPERJ).

\bibliographystyle{eptcs}
\bibliography{bibliography}
\end{document}